\documentclass[a4paper,journal]{IEEEtran}

\usepackage[dvipsnames]{xcolor}

\IEEEoverridecommandlockouts
\usepackage[utf8]{inputenc}
\usepackage[cmex10]{amsmath}
\DeclareMathOperator{\diag}{diag}
\usepackage{cleveref}
\usepackage{amsthm}
\usepackage{subfigure}
\usepackage{booktabs}
\usepackage{amsfonts}
\usepackage{amssymb}
\usepackage{bbm}
\usepackage{tikz}
\usepackage{physics}
\usepackage{multirow}

\usepackage[short]{optidef}
\usepackage{mathtools}
\usepackage[noadjust]{cite}
\usepackage{textcomp}
\usetikzlibrary{shapes,arrows}
\usepackage{comment}
\usepackage{algorithm}
\usepackage{algpseudocode}

\usepackage{float}
\floatstyle{plaintop}


\usepackage{caption}
\captionsetup{labelformat=empty}

\usepackage{tikz}

\usepackage{colortbl}

\usepackage{multicol}

\interdisplaylinepenalty=2500
\hyphenation{op-tical net-works semi-conduc-tor}
\newtheorem*{theorem*}{Theorem}

\newtheorem{proposition}{Proposition}

\title{\Large Optimization of RIS-Aided MIMO -- A Mutually Coupled Loaded Wire Dipole Model}
\author{H.~El~Hassani, X.~Qian, S. Jeong, N. S. Perovi\'c, M. Di Renzo, \IEEEmembership{Fellow,~IEEE}, P. Mursia, \IEEEmembership{Member,~IEEE}, V. Sciancalepore, \IEEEmembership{Senior~Member,~IEEE}, and X. Costa-Pérez, \IEEEmembership{Senior~Member,~IEEE} \vspace{-0.90cm}
\thanks{Manuscript received June 15, 2023; revised Sep. 17, 2023. S. Jeong, N. S. Perovi\'c, M. Di Renzo are with Universit\'e Paris-Saclay, CNRS, CentraleSup\'elec, Laboratoire des Signaux et Syst\`emes, 91192 Gif-sur-Yvette, France. (marco.di-renzo@universite-paris-saclay.fr). H. El Hassani and X. Qian were with Universit\'e Paris-Saclay, CNRS, CentraleSup\'elec, when this work was done. P. Mursia, V. Sciancalepore are with NEC Labs Europe, Germany. X. Costa-Pérez is with the i2cat Research Center, Spain, and with NEC Labs Europe, Germany. This work was supported in part by the European Commission through the projects H2020 ARIADNE (grant 871464), H2020 RISE-6G (grant 101017011), H2020 SURFER (grant 101030536), the NEC Fellowship Program, and the French ANR through the PEPR-5G project.
}
}

\begin{document}

\maketitle

\begin{abstract}
We consider a reconfigurable intelligent surface (RIS) assisted multiple-input multiple-output (MIMO) system in the presence of scattering objects. The MIMO transmitter and receiver, the RIS, and the scattering objects are modeled as mutually coupled thin wires connected to load impedances. We introduce a novel numerical algorithm for optimizing the tunable loads connected to the RIS, which does not utilize the Neumann series approximation. The algorithm is provably convergent, has polynomial complexity with the number of RIS elements, and outperforms the most relevant benchmark algorithms while requiring fewer iterations and converging in a shorter time.
\end{abstract}
\vspace{-0.20cm}
\begin{IEEEkeywords}
Reconfigurable intelligent surface, mutual coupling, scattering objects, loaded thin wires, optimization.
\end{IEEEkeywords}

\vspace{-0.40cm}
\section{Introduction}
\IEEEPARstart{T}{he} reconfigurable intelligent surface (RIS) is an emerging technology that has the ability of smartly controlling the propagation environment without the need of power amplifiers or digital signal processing units \cite{MDR_JSAC}. In RIS-aided communications, it is essential to utilize scattering models that are electromagnetically consistent \cite{MDR_PIEEE} and hence account for the mutual coupling caused by the sub-wavelength design \cite{marco_EuCAP}.

To this end, a communication model for RIS-aided systems that explicitly considers the mutual coupling was introduced in \cite{gradoni2021end}. The model relies on a loaded thin wire dipole approximation for the reconfigurable elements of the RIS \cite{DDA}. Departing from \cite{gradoni2021end}, the authors of \cite{xuewen_SISO} proposed a framework for optimizing the tunable impedances of an RIS-aided single-input single-output (SISO) system. Then, the authors of \cite{abrardo_MIMO} generalized the algorithm in \cite{xuewen_SISO} for application to multi-user and multi-RIS multiple-input multiple-output (MIMO) systems. Recently, the authors of \cite{placido_MISO} generalized the model in \cite{gradoni2021end} by including, in an electromagnetically consistent manner, the impact of scattering objects in the environment. Also, an algorithm that optimizes the sum-rate was introduced.

The algorithms in \cite{xuewen_SISO,abrardo_MIMO,placido_MISO} rely on Neumann's series approximation for optimizing the impedances of the RIS. The novelty and contribution of this paper are to develop an algorithm for RIS-aided MIMO systems that does not utilize any approximation. The proposed approach optimizes the tunable load impedances one by one and iteratively. At each iteration, a closed-form solution is provided by applying Gram-Schmidt's orthogonalization method. Using numerical simulations, we compare the proposed approach with those in \cite{abrardo_MIMO,placido_MISO}, and show that it requires fewer iterations and less time to converge.

\textit{Notation}:
Matrices/vectors are denoted by capital/small bold fonts. $j$ is the imaginary unit. $|z|$, $z^*$, $\Re{z}$, $\Im{z}$ denote the absolute value, conjugate, real, imaginary parts of $z$. $\mathbb{E}\{\cdot\}$ denotes the expectation. $\lVert \cdot \rVert$ denotes the $l_2$-norm. $\diag(\mathbf{a})$ is the square diagonal matrix with the elements of $\mathbf{a}$ on the main diagonal. $\mathbf{A}^{-1}$, $\mathbf{A}^{T}$, $\mathbf{A}^{H}$, $\det(\mathbf{A})$, $\tr(\mathbf{A})$, $\rank(\mathbf{A})$ denote the inverse, transpose, hermitian, determinant, trace, rank of $\mathbf{A}$. $\mathbf{A}(i,j)$ is the $(i,j)$th element of $\mathbf{A}$. $\langle \mathbf{a},\mathbf{b}\rangle$ is the inner product between $\mathbf{a}$ and $\mathbf{b}$. $\mathbf{I}_N$ and $\mathbf{0}_N$ denote the identity and zero matrices of size $N \times N$. $\mathcal{O}(\cdot)$ stands for the big-O notation.

\vspace{-0.25cm}
\section{System Model}
{\textcolor{black}{We consider an RIS-aided MIMO system that comprises a transmitter with $M$ antennas, a receiver with $L$ antennas, and an RIS with $N_{\rm RIS}$ tunable scattering elements, similar to \cite[Fig. 1]{placido_MISO}}}. Based on \cite{gradoni2021end}, the transmit and receive antennas, and the scattering elements of the RIS are modeled as cylindrical thin wire dipoles of perfectly conducting
material whose length is $l$ and whose radius $a \ll l$ is finite but negligible. The thin dipoles of the RIS are controlled by tunable complex-valued impedances. The inter-distance between adjacent scattering elements is denoted by $d \le \lambda/2$, where $\lambda$ is the wavelength. Similar to \cite{placido_MISO}, we consider $N_{e}$ scattering objects that mimic multipath propagation. They are distributed in clusters and are modeled as cylindrical thin wire dipoles of perfectly conducting material and are connected to specified  impedances. The length of the wires and the impedances depend on the electromagnetic properties of the material objects. If the scattering object is a metal plate, e.g., the impedances are equal to zero. {\textcolor{black}{Compared with the statistical multipath channel utilized in \cite{abrardo_MIMO}, the multipath model introduced in \cite{placido_MISO} is electromagntically consistent. Notably, the impact of the scattering objects is not a pure additive term as in \cite{abrardo_MIMO}}}.

Based on \cite{placido_MISO}, the end-to-end channel can be expressed as{\textcolor{black}{{\footnote{{\textcolor{black}{As detailed in \cite{MDR_PIEEE}, \eqref{channel} is valid in the near field of the RIS. Thus, it provides consistent results as the number of RIS elements increases without bound.}}}}}}
\begin{equation}
\mathbf{H}_{\rm E2E} =\mathbf{Z}_{\rm RL} \left[ {\mathbf{Z}_{\rm ROT} - \mathbf{Z}_{\rm ROS} \mathbf{Z}_{\rm sca} \mathbf{Z}_{\rm SOT}} \right] \mathbf{Z}_{\rm TG} \label{channel}
\end{equation}
where $\mathbf{Z}_{\rm RL}= \left(\mathbf{I}_L+\mathbf{Z}_{\rm RR} \mathbf{Z}_L^{-1}\right)^{-1}$, $\mathbf{Z}_{\rm TG}=\left(\mathbf{Z}_{\rm TT}+\mathbf{Z}_G\right)^{-1}$, and $\mathbf{Z}_{\rm sca}= \left(\mathbf{Z}_{\rm SS} + \mathbf{Z}_{\rm SOS} +\mathbf{Z}_{\rm RIS} \right)^{-1}$. Specifically, $\mathbf{Z}_G \in \mathbb{C}^{M \times M}$ and $\mathbf{Z}_L \in \mathbb{C}^{L \times L}$ are diagonal matrices containing the impedances of the voltage generators at the transmitter and the load impedances at the receiver; $\mathbf{Z}_{\rm TT}\in \mathbb{C}^{M \times M}$ and $\mathbf{Z}_{\rm RR}\in \mathbb{C}^{L \times L}$ are the matrices containing the self and mutual impedances at the transmitter and receiver; $\mathbf{Z}_{\rm ROT} \in \mathbb{C}^{L \times M}$, $\mathbf{Z}_{\rm ROS} \in \mathbb{C}^{L \times N_{\rm RIS}}$, $\mathbf{Z}_{\rm SOS} \in \mathbb{C}^{N_{\rm RIS} \times N_{\rm RIS}}$ and $\mathbf{Z}_{\rm SOT} \in \mathbb{C}^{N_{\rm RIS} \times M}$ are the matrices containing the mutual impedances between  different network elements, with $\{\rm{T},R,S,O\}$ denoting the transmitter, receiver, RIS and scattering objects.

From \cite[Eqs. (9)-(13)]{placido_MISO}, we obtain $\mathbf{Z}_{\rm ROT}= \mathbf{Z}_{\rm RT}-\mathbf{Z}_{\rm RO}\mathbf{\Bar{Z}}^{-1}_{\rm OO}\mathbf{Z}_{\rm OT}$, $\mathbf{Z}_{\rm ROS}= \mathbf{Z}_{\rm RO} \mathbf{\Bar{Z}}^{-1}_{\rm OO} \mathbf{Z}_{\rm OS} - \mathbf{Z}_{\rm RS}$, $\mathbf{Z}_{\rm SOS}= - \mathbf{Z}_{\rm SO}\mathbf{\Bar{Z}}^{-1}_{\rm OO} \mathbf{Z}_{\rm OS}$, and $\mathbf{Z}_{\rm SOT}= \mathbf{Z}_{\rm SO}\mathbf{\Bar{Z}}^{-1}_{\rm OO}\mathbf{Z}_{\rm OT}-\mathbf{Z}_{\rm ST}$, where $\mathbf{\Bar{Z}}_{\rm OO}=\mathbf{Z}_{\rm OO}+\mathbf{Z}_{\rm US}  \in \mathbb{C}^{N_e \times N_e }$, with $\mathbf{Z}_{\rm US}$ being the diagonal matrix containing the material-dependent load impedances of the scattering objects, which are assumed given and fixed.

The remaining matrices are the self and mutual impedances between pairs of network elements. Also, $\mathbf{Z}_{\rm SS} \in \mathbb{C}^{N_{\rm RIS} \times N_{\rm RIS}}$ is the matrix of self and mutual impedances between pairs of RIS elements. When the mutual coupling is negligible, $\mathbf{Z}_{\rm SS}$ is a (dominant) diagonal matrix. {\textcolor{black}{The impedance matrices in \eqref{channel} can be computed by using either the framework in \cite{marco_EuCAP} or full-wave simulations. The analytical solution in \cite{marco_EuCAP} is typically preferable, especially if $\mathbf{Z}_{\rm SS}$ is an optimization variable \cite{ZelayaMH21}}}. Finally, $\mathbf{Z}_{\rm RIS} \in \mathbb{C}^{N_{\rm RIS} \times N_{\rm RIS}}$ is a diagonal matrix whose entries are the tunable impedances of the RIS elements, which are to be optimized. $\mathbf{Z}_{\rm RIS}$ can be expressed as $\mathbf{Z}_{\rm RIS}= \diag \left(\{R_{0,k}+j X_{k}\}_{k=1}^{N_{\rm RIS}}\right)$, where $R_{0,k} \geq 0$ is the parasitic resistance that models the internal losses of the $k$th RIS element, which is assumed fixed, and $X_{k} \in \mathcal{P}$ is the reactance of the $k$th load impedance, whose value lies in the feasible set $\mathcal{P}=[ X_{\ell b},X_{ub}] \subset \mathbb{R}$ and is to be optimized.

Let $ \mathbf{x} \in \mathbb{C}^{M \times 1}$ be the transmitted vector. The transmit covariance matrix is $\mathbf{Q}=\mathbb{E}\{\mathbf{x}\mathbf{x}^H\} \in \mathbb{C}^{M \times M}$, with $\mathbf{Q}$ a positive semi-definite matrix, i.e., $\mathbf{Q} \succcurlyeq 0$. We consider the average sum power constraint $\tr(\mathbf{Q}) \leq P_t$, where $P_t$ is the total power. Thus, the received vector is $\mathbf{y}=\mathbf{\mathbf{H}}_{\rm E2E} \mathbf{x}+\mathbf{n}$, where $\mathbf{n} \sim \mathcal{CN}(0,\sigma^2 \mathbf{I}_L)$ is the circularly symmetric complex Gaussian noise vector with zero mean and variance $\sigma^2$.

Therefore, the achievable data rate can be formulated as
\begin{equation}
    R (\mathbf{Q},\mathbf{Z}_{\rm RIS})= \log_2 \left[ \det(\mathbf{I}_L+\frac{\mathbf{H}_{\rm E2E} \mathbf{Q} \mathbf{H}_{\rm E2E}^H }{\sigma^{2}})\right] \label{rate}
\end{equation}

\vspace{-0.25cm}
\section{Problem Formulation and Solution}
We aim to maximize the achievable rate in \eqref{rate} as a function of $\mathbf{Q}$ and $\mathbf{Z}_{\rm RIS}$. Specifically, we have ($k\in\{1,\hdots,N_{\rm RIS}\}$)
\begin{alignat}{3}
\label{opt:P1}
\mathbf{(P0)} \quad \max_{\mathbf{\mathbf{Q},\mathbf{Z}}_{\rm RIS}}& \ R (\mathbf{Q},\mathbf{Z}_{\rm RIS}) \\
\text{s.t. }\quad  &  \Re\{\mathbf{Z}_{\rm RIS}(k,k)\} = R_{0,k} \geq 0, \; \; \forall k
& \\
              & \Im\{\mathbf{Z}_{\rm RIS}(k,k)\} \in \mathcal{P}, \; \; \forall k & \\
               & \tr(\mathbf{Q}) \leq P_t, \quad \mathbf{Q} \succcurlyeq  \mathbf{0} &
\end{alignat}

The formulated optimization problem is non-convex due to the joint optimization of the transmit covariance matrix $\mathbf{Q}$ and the matrix of tunable impedances $\mathbf{Z}_{\rm RIS}$. To tackle it, we introduce an iterative algorithm based on the alternating optimization (AO) method, which decouples $\mathbf{(P0)}$ into two-sub-problems. First, $\mathbf{(P0)}$ is solved with respect to $\mathbf{Q}$ while keeping $\mathbf{Z}_{\rm RIS}$ fixed, and then $\mathbf{(P0)}$ is solved with respect to $\mathbf{Z}_{\rm RIS}$ while keeping $\mathbf{Q}$ fixed. The details are given next.

\vspace{-0.25cm}
\subsection{Optimization of $\mathbf{Q}$}
By keeping $\mathbf{Z}_{\rm RIS}$ fixed, $\mathbf{(P0)}$ boils down to a conventional MIMO optimization problem \cite{waterfill}. Specifically,
let $\mathbf{H}_{\rm E2E}=\mathbf{U}_{\mathbf{H}_{\rm E2E}} \mathbf{\Sigma}_{\mathbf{H}_{\rm E2E}} \mathbf{V}_{\mathbf{H}_{\rm E2E}}^H$ be the singular value decomposition of $\mathbf{H}_{\rm E2E}$, where $\mathbf{V}_{\mathbf{H}_{\rm E2E}} \in \mathbb{C}^{M \times D}$, $\mathbf{U}_{\mathbf{H}_{\rm E2E}} \in \mathbb{C}^{L \times D}$, and $D=\rank(\mathbf{H}_{\rm E2E}) \leq \min(L,M)$. Then, the optimal $\mathbf{Q}^{\star}$ is
\begin{equation}
\mathbf{Q}^{\star}=\mathbf{V}_{\mathbf{H}_{\rm E2E}} \diag(p_1^{\star}, \hdots, p_D^{\star}) \mathbf{V}_{\mathbf{H}_{\rm E2E}}^H \label{waterfilling}
\end{equation}
where $p_i^{\star}=\max \left(\left( 1/\alpha - \sigma^2 \big / \mathbf{\Sigma}_{\mathbf{H}_{\rm E2E}}(i,i)^2 \right),0 \right)$, with $\alpha$ satisfying $\sum_{i=1}^D p_i^{\star} = P_t$ (water-filling power allocation).

\vspace{-0.25cm}
\subsection{Optimization of $\mathbf{Z}_{\rm RIS}$}
By keeping $\mathbf{Q}$ fixed, the resulting optimization problem with respect to $\mathbf{Z}_{\rm RIS}$ simplifies to ($k\in\{1,\hdots,N_{\rm RIS}\}$)
\begin{alignat}{3}
\mathbf{(P1)} \quad \max_{\mathbf{\mathbf{Z}}_{\rm RIS}}& \ \log_2 \left[ \det(\mathbf{I}_L+\frac{\mathbf{H}_{\rm E2E} \mathbf{Q} \mathbf{H}_{\rm E2E}^H }{\sigma^2})\right] \\
\text{s.t. }\quad  &  \Re\{\mathbf{Z}_{\rm RIS}(k,k)\} = R_{0,k} \geq 0 , \; \; \forall k & \\
              & \Im\{\mathbf{Z}_{\rm RIS}(k,k)\} \in \mathcal{P} , \; \; \forall k &
\end{alignat}

In \cite{xuewen_SISO,abrardo_MIMO,placido_MISO}, $\mathbf{(P1)}$ is tackled by capitalizing on the Neumann series approximation, which offers a first-order linear approximation for $\mathbf{Z}_{\rm sca}$ as a function of $\mathbf{Z}_{\rm RIS}$. We circumvent the Neumann series approximation by devising a new approach that combines Sherman-Morrison's inversion formula, Sylvester's determinant theorem, and, more importantly, Gram-Schmidt's orthogonalization method. Specifically, the proposed approach exploits the block coordinate descent (BCD) method \cite[Subsec. 2.7]{convergence1}, which, at the $k$th step, updates the $k$th  tunable impedance $\mathbf{Z}_{\rm RIS} (k,k)$, while keeping all the other impedances fixed and setting them to their most recently updated values.

We depart from $\mathbf{Z}_{\rm sca}$ and decouple the $k$th tunable impedance $\mathbf{Z}_{\rm RIS} (k,k)$ to be optimized from all the other impedances that are kept fixed. Accordingly, we write
\begin{equation}
\mathbf{Z}_{\rm sca}= \left( \mathbf{Z}_{\rm SS} + \mathbf{Z}_{\rm SOS} + \mathbf{Z}_{{\rm RIS},k} +  \mathbf{Z}_{\rm RIS}(k,k)  \mathbf{e}_k \mathbf{e}_k^T \right)^{-1} \label{1steq}
\end{equation}
where $\mathbf{Z}_{{\rm RIS},k}$ denotes the matrix $\mathbf{Z}_{\rm RIS}$ with $\mathbf{Z}_{\rm RIS}(k,k)=0$, and $\mathbf{e}_k$ denotes the vector whose entries are all zeros except the $k$th entry that is set equal to one. We aim to optimize the elements of $\mathbf{Z}_{\rm RIS}$ one by one, i.e., at the $k$th step, we optimize $\mathbf{Z}_{\rm RIS}(k,k)$ and keep all the other elements in $\mathbf{Z}_{{\rm RIS},k}$ fixed.

For ease of presentation, we introduce the notation
\begin{equation}
    \mathbf{A}_k =\mathbf{Z}_{\rm SS}+ \mathbf{Z}_{\rm SOS}+\mathbf{Z}_{{\rm RIS},k}, \quad    z_k=\mathbf{Z}_{\rm RIS}(k,k)
\end{equation}
Also, $\mathbf{A}_k$ and $\mathbf{A}_k + z_k\mathbf{e}_k \mathbf{e}_k^T$ are assumed to be invertible matrices, which is ensured by the physical nature of the problem and can be tested during the execution of the algorithm.

By applying the Sherman-Morrison formula \cite[Subsec. 2.7.1]{book_sciComputing} to \eqref{1steq}, the matrix $\mathbf{Z}_{\rm sca}$ can be written as
\begin{equation}
    \mathbf{Z}_{\rm sca} = \mathbf{Z}_{\rm sca} \left(z_k\right) = \mathbf{A}_k^{-1}- \frac{ \mathbf{A}_k^{-1} \mathbf{e}_k \mathbf{e}_k^T \mathbf{A}_k^{-1}}{1+z_k\mathbf{e}_k^T \mathbf{A}_k^{-1}\mathbf{e}_k} z_k  \label{Zsca}
\end{equation}

For ease of writing, we introduce the shorthand notation
\begin{align}
& \tilde{\mathbf{A}}_k^{-1} = \frac{ \mathbf{A}_k^{-1} \mathbf{e}_k \mathbf{e}_k^T \mathbf{A}_k^{-1}}{\mathbf{e}_k^T \mathbf{A}_k^{-1}\mathbf{e}_k} \label{TildeA}  \\
& \mathbf{B}_k= \mathbf{Z}_{\rm RL}\left[\mathbf{Z}_{\rm ROT}-\mathbf{Z}_{\rm ROS}\left( \mathbf{A}_k^{-1}-\tilde{\mathbf{A}}_k^{-1}\right) \mathbf{Z}_{\rm SOT}\right] \mathbf{Z}_{\rm TG} \label{B_k}\\
& \mathbf{C}_k=-\mathbf{Z}_{\rm RL} \left[\mathbf{Z}_{\rm ROS}\tilde{\mathbf{A}}_k^{-1} \mathbf{Z}_{\rm SOT} \right] \mathbf{Z}_{\rm TG}\label{C_k} \\
& \mathbf{X}_1\left(z_k\right)=\ \frac{1}{{\sigma^2}}\left(\frac{1}{\chi_k}\mathbf{C}_k \mathbf{Q} \mathbf{B}_k^H +\left(\frac{1}{\chi_k}\mathbf{C}_k \mathbf{Q} \mathbf{B}_k^H\right)^H \right) \label{X1} \\
& \mathbf{X}_2\left(z_k\right) = \frac{1}{\sigma^2 |\chi_k|^2} \mathbf{C}_k \mathbf{Q} \mathbf{C}_k^H \label{X2}
\end{align}
where $\chi_k=\chi_k\left(z_k\right) =1+a_k z_k$ and $a_k=\mathbf{e}_k^T \mathbf{A}_k^{-1} \mathbf{e}_k$. Equations \eqref{TildeA}-\eqref{X2} can be applied if $a_k \ne 0$ and $\chi_k \ne 0$, which is ensured by the physical nature of the problem at hand, and can be tested during the execution of the algorithm.

Hence, the end-to-end channel in \eqref{channel} can be expressed as
\begin{align}
    \mathbf{H}_{\rm E2E} = \mathbf{B}_k+ \mathbf{C}_k/{\chi_k \left(z_k\right)} \label{compx1}
\end{align}
and the achievable data rate in \eqref{rate} can be expressed as
\begin{align}
   & R(z_k) =  \log_2 \left[\det \left(\mathbf{I}_L + \frac{\mathbf{B}_k \mathbf{Q} \mathbf{B}_k^H}{\sigma^2}+\mathbf{X}_1\left(z_k\right) +\mathbf{X}_2\left(z_k\right) \right) \right] \nonumber \\
   & \overset{(a)}{=} \log_2 \left[\det \left( \left[\mathbf{I}_L + \frac{\mathbf{B}_k \mathbf{Q} \mathbf{B}_k^H}{\sigma^2} \right] \mathbf{S}_k\left(z_k\right) \right)\right] \\
   &  = \log_2 \left[\det(\mathbf{I}_L + \frac{\mathbf{B}_k \mathbf{Q} \mathbf{B}_k^H}{\sigma^2})\right]+\log_2 \left[\det(\mathbf{S}_k\left(z_k\right))\right] \label{refeq1}
\end{align}
where $(a)$ is obtained by first applying the eigenvalue decomposition $\mathbf{I}_L+{\mathbf{B}_k \mathbf{Q} \mathbf{B}_k^H}/{\sigma^2}=\mathbf{U}_k \mathbf{\Sigma}_k \mathbf{U}_k^H$, where $\mathbf{U}_k$ is the unitary matrix of eigenvectors and $\mathbf{\Sigma}_k$ is the diagonal matrix of eigenvalues, and $ (\mathbf{U}_k \mathbf{\Sigma}_k \mathbf{U}_k^H)^{-1} = \mathbf{U}_k \mathbf{\Sigma}_k^{-1} \mathbf{U}_k^H$, and by then defining $\mathbf{S}_k\left(z_k\right)=\mathbf{I}_L+ \mathbf{U}_k \mathbf{\Sigma}_k^{-1} \mathbf{U}_k^H \left( \mathbf{X}_1\left(z_k\right) +\mathbf{X}_2\left(z_k\right) \right)$.

In \eqref{refeq1}, only $\mathbf{S}_k=\mathbf{S}_k\left(z_k\right)$ depends on the $k$th tunable impedance $z_k$ to be optimized. Therefore, maximizing the achievable data rate boils down to maximizing $\det(\mathbf{S}_k)$. By applying Sylvester's determinant theorem \cite{sylvester}, we obtain
\begin{align}
    &\hspace{-0.35cm}\det(\mathbf{S}_k\left(z_k\right)) \label{Eq:DetS} \\ &=\det \left(\mathbf{I}_L+ \mathbf{\Sigma}_k^{-\frac{1}{2}} \mathbf{U}_k^H \left(\mathbf{X}_1\left(z_k\right)+\mathbf{X}_2\left(z_k\right)\right)\mathbf{U}_k \mathbf{\Sigma}_k^{-\frac{1}{2}} \right) \nonumber
\end{align}

Denote ${{\bf{D}}_1} =  - {{\bf{Z}}_{\rm RL}}{{\bf{Z}}_{\rm ROS}}{\bf{A}}_k^{ - 1}$, ${{\bf{D}}_2} = {\bf{A}}_k^{ - 1}{{\bf{Z}}_{\rm SOT}}{{\bf{Z}}_{\rm TG}}$. Then, ${{\bf{C}}_k} = {{\bf{D}}_1}{{\bf{e}}_k}{\bf{e}}_k^T{{\bf{D}}_2}/a_k$. By definition of rank, ${\rm{rank}}\left( {{{\bf{C}}_k}} \right) \le \min \left\{ {{\rm{rank}}\left( {{{\bf{D}}_1}} \right),{\rm{rank}}\left( {{{\bf{e}}_k}{\bf{e}}_k^T} \right),{\rm{rank}}\left( {{{\bf{D}}_2}} \right)} \right\}$. Since ${{\rm{rank}}\left( {{{\bf{e}}_k}{\bf{e}}_k^T} \right)}=1$, we obtain ${\rm{rank}}\left( {{{\bf{C}}_k}} \right) = 1$. Without loss of generality, we can then write $\mathbf{C}_k = \mathbf{u}_k\mathbf{v}_k^H$, where
\begin{equation}
    \mathbf{u}_k=-\mathbf{Z}_{\rm RL} \mathbf{Z}_{\rm ROS}\mathbf{A}_k^{-1} \mathbf{e}_k,\; \;  \mathbf{v}_k^H=\mathbf{e}_k^{T}\frac{ \mathbf{A}_k^{-1}}{a_k} \mathbf{Z}_{\rm SOT} \mathbf{Z}_{\rm TG}
\end{equation}

For ease of writing, we introduce the vectors
\begin{equation}
    \tilde{\mathbf{u}}_k =\mathbf{\Sigma}_k^{-\frac{1}{2}} \mathbf{U}_k^H \mathbf{u}_k, \quad \tilde{\mathbf{v}}_k^H = \mathbf{v}_k^H \mathbf{Q} \mathbf{B}_k^H \mathbf{U}_k \mathbf{\Sigma}_k^{-\frac{1}{2}} \label{notation_hat_ukvk}
\end{equation}

By inserting $\mathbf{X}_1\left(z_k\right)$ in \eqref{X1}, $\mathbf{X}_2\left(z_k\right)$ in \eqref{X2} into $\mathbf{S}_k\left(z_k\right)$, and employing the shorthand notation in \eqref{notation_hat_ukvk}, we obtain
\begin{equation}
\mathbf{S}_k\left(z_k\right)=\mathbf{I}_L+\frac{\tilde{\mathbf{u}}_k \tilde{\mathbf{v}}_k^H}{\sigma^2 \chi_k\left(z_k\right)}  + \frac{\tilde{\mathbf{v}}_k \tilde{\mathbf{u}}_k^H}{\sigma^2 \chi_k^*\left(z_k\right)} + \frac{\tilde{\mathbf{u}}_k \tilde{\mathbf{u}}_k^H \mathbf{v}_k^H \mathbf{Q}\mathbf{v}_k}{\sigma^2 |\chi_k\left(z_k\right)|^2}  \label{Sk}
\end{equation}

In \eqref{Sk}, we note that the optimization variable $z_k$ appears only in $\chi_k\left(z_k\right) =1+a_k z_k$, while the vectors and matrices in \eqref{Sk} are independent of $z_k$. The next step is the computation of the determinant of \eqref{Sk}. Since the determinant is invariant to a change of basis functions, $\mathbf{S}_k$ can be expressed in terms of a convenient orthonormal basis that facilitates the computation of $\det(\mathbf{S}_k\left(z_k\right))$. To this end, we apply the Gram-Schmidt orthogonalization \cite[Subsec. 2.6.5]{book_sciComputing} to the vectors $\tilde{\mathbf{u}}_k$ and $\tilde{\mathbf{v}}_k$ in \eqref{notation_hat_ukvk}, since they determine $\mathbf{S}_k\left(z_k\right)$ in \eqref{Sk}. The new set of vectors is denoted by $\mathbf{t}_1$ and $\mathbf{t}_2$, and they are constructed for being orthogonal to each other and to have a unit norm.

In detail, the two vectors $\mathbf{t}_1$ and $\mathbf{t}_2$ are set to $\mathbf{t}_1=\frac{1}{||\tilde{\mathbf{u}}_k||} \tilde{\mathbf{u}}_k$ and $\mathbf{t}_2=\frac{1}{|| \mathbf{t} ||} \mathbf{t}$, with $\mathbf{t} = \tilde{\mathbf{v}}_k- \frac{\langle \tilde{\mathbf{v}}_k , \mathbf{t}_1\rangle}{\langle \mathbf{t}_1 , \mathbf{t}_1\rangle}\mathbf{t}_1 = \frac{||\tilde{\mathbf{u}}_k||^2 \tilde{\mathbf{v}}_k - \tilde{\mathbf{u}}_k^H \tilde{\mathbf{v}}_k\tilde{\mathbf{u}}_k}{||\tilde{\mathbf{u}}_k||^2}$. Accoridngly, $\tilde{\mathbf{u}}_k$ and $\tilde{\mathbf{v}}_k$ in \eqref{notation_hat_ukvk} can be expressed as
\begin{equation}
    \tilde{\mathbf{u}}_k= (\mathbf{t}_1 \; \mathbf{t}_2) \begin{pmatrix} || \tilde{\mathbf{u}}_k || \\ 0  \end{pmatrix}, \quad
    \tilde{\mathbf{v}}_k= (\mathbf{t}_1 \; \mathbf{t}_2) \begin{pmatrix} \mathbf{t}_1^H \tilde{\mathbf{v}}_k \\ \mathbf{t}_2^H \tilde{\mathbf{v}}_k  \end{pmatrix}
\end{equation}

Also, the last three addends in \eqref{Sk} can be reformulated as
\begin{align}
    \tilde{\mathbf{u}}_k \tilde{\mathbf{v}}_k^H
    = (\mathbf{t}_1 \; \mathbf{t}_2) &       \begin{pmatrix}  || \tilde{\mathbf{u}}_k || \tilde{\mathbf{v}}_k^H \mathbf{t}_1 & || \tilde{\mathbf{u}}_k || \tilde{\mathbf{v}}_k^H \mathbf{t}_2\\ 0 & 0   \end{pmatrix}    \begin{pmatrix}  \mathbf{t}_1^H\\ \mathbf{t}_2^H   \end{pmatrix}
\end{align}
\begin{align}
    \tilde{\mathbf{v}}_k \tilde{\mathbf{u}}_k^H =  (\mathbf{t}_1 \; \mathbf{t}_2) &       \begin{pmatrix}  || \tilde{\mathbf{u}}_k || \mathbf{t}_1^H \tilde{\mathbf{v}}_k &  0 \\ ||\tilde{\mathbf{u}}_k || \mathbf{t}_2^H \tilde{\mathbf{v}}_k & 0   \end{pmatrix}    \begin{pmatrix}  \mathbf{t}_1^H\\ \mathbf{t}_2^H   \end{pmatrix}
\end{align}
\begin{align}
    \tilde{\mathbf{u}}_k \tilde{\mathbf{u}}_k^H \mathbf{v}_k^H \mathbf{Q} \mathbf{v}_k
    = (&\mathbf{t}_1 \; \mathbf{t}_2) \begin{pmatrix}  || \tilde{\mathbf{u}}_k ||^2 \mathbf{v}_k^H \mathbf{Q} \mathbf{v}_k  &  0 \\ 0 & 0   \end{pmatrix}   \begin{pmatrix} \mathbf{t}_1^H \\ \mathbf{t}_2^H  \end{pmatrix}
\end{align}

In addition, the identity matrix $\mathbf{I}_L$ can be written as
\begin{equation}
    \mathbf{I}_L= (\mathbf{t}_1 \; \mathbf{t}_2) \begin{pmatrix}  1 &  0 \\ 0 & 1   \end{pmatrix}   \begin{pmatrix} \mathbf{t}_1^H \\ \mathbf{t}_2^H  \end{pmatrix}
\end{equation}

Let $s_k$ be the complex scalar defined as
\begin{equation}
s_k\left(z_k\right)=\frac{|| \tilde{\mathbf{u}}_k || \tilde{\mathbf{v}}_k^H \mathbf{t}_1}{\sigma^2 \chi_k\left(z_k\right)}  + \frac{ || \tilde{\mathbf{u}}_k || \mathbf{t}_1^H \tilde{\mathbf{v}}_k}{\sigma^2 \chi_k^*\left(z_k\right)} + \frac{|| \tilde{\mathbf{u}}_k ||^2 \mathbf{v}_k^H \mathbf{Q} \mathbf{v}_k}{\sigma^2 |\chi_k\left(z_k\right)|^2}
\end{equation}

Accordingly, $\det(\mathbf{S}_k\left(z_k\right))$ in \eqref{Eq:DetS} can be expressed as
\begin{align}
  \det(\mathbf{S}_k){=}&\det \! \left(\!(\mathbf{t}_1 \; \mathbf{t}_2)\! \begin{pmatrix}  1+s_k\left(z_k\right) \; &  \frac{|| \tilde{\mathbf{u}}_k || \tilde{\mathbf{v}}_k^H \mathbf{t}_2}{\sigma^2 \chi_k\left(z_k\right)} \\ \frac{||\tilde{\mathbf{u}}_k || \mathbf{t}_2^H \tilde{\mathbf{v}}_k}{\sigma^2 \chi_k^*\left(z_k\right)} & 1   \end{pmatrix} \!  \begin{pmatrix} \mathbf{t}_1^H \\ \mathbf{t}_2^H  \end{pmatrix}\!\right) \nonumber\\
      \overset{(b)}{=}&\det\begin{pmatrix} 1+ s_k\left(z_k\right) \; &  \frac{|| \tilde{\mathbf{u}}_k || \tilde{\mathbf{v}}_k^H \mathbf{t}_2}{\sigma^2 \chi_k\left(z_k\right)} \\ \frac{||\tilde{\mathbf{u}}_k || \mathbf{t}_2^H \tilde{\mathbf{v}}_k}{\sigma^2 \chi_k^*\left(z_k\right)} & 1   \end{pmatrix} = \det \left(\mathbf{W}\right) \label{Eq:FinalDet}
\end{align}
where $(b)$ follows by defining $\mathbf{T}=(\mathbf{t}_1 \; \mathbf{t}_2)$ and noting that $\det \left( {{\bf{TW}}{{\bf{T}}^H}} \right) = \det \left( {{{\bf{T}}^H}{\bf{TW}}} \right) = \det \left( {\bf{W}} \right)$, since $\det \left( {{{\bf{T}}^H}{\bf{T}}} \right) = {{\bf{I}}_2}$ as $\mathbf{T}$ is a unitary matrix by construction.

Since $\mathbf{W} = \mathbf{W}\left(z_k\right)$ in \eqref{Eq:FinalDet} is a $2 \times 2$ matrix, the determinant of $\mathbf{S}_k=\mathbf{S}_k\left(z_k\right)$ can be expressed in closed-form as
\begin{equation}  \det(\mathbf{S}_k\left(z_k\right))=1+\frac{c_1}{\chi_k\left(z_k\right)} + \frac{c_1^*}{\chi_k^*\left(z_k\right)}  + \frac{c_2}{|\chi_k\left(z_k\right)|^2}
\end{equation}
where
\begin{equation}
    c_1= \frac{|| \tilde{\mathbf{u}}_k || \tilde{\mathbf{v}}_k^H \mathbf{t}_1}{\sigma^2},  \;
    c_2 = \frac{|| \tilde{\mathbf{u}}_k ||^2 \mathbf{v}_k^H \mathbf{Q} \mathbf{v}_k}{\sigma^2} - \frac{|| \tilde{\mathbf{u}}_k ||^2  | \tilde{\mathbf{v}}_k^H \mathbf{t}_2|^2}{\sigma^4} \nonumber
\end{equation}

In conclusion, since $\chi_k\left(z_k\right)=1+a_k z_k$, with $z_k= R_{0,k}+jX_k$ and $R_{0,k}$ is assumed known and fixed, $(\mathbf{P1})$ boils down to maximizing the single-variable (i.e., $X_k$) function
\begin{align}
    f(X_k) & = 1+\frac{c_1}{ 1+a_k(R_{0,k}+jX_k)}  + \frac{ c_1^*}{1+a_k^* (R_{0,k}+jX_k)^*} \nonumber \\ & + \frac{c_2}{|1+a_k (R_{0,k}+jX_k)|^2}, \quad X_k \in \mathcal{P} \label{Eq:f_X}
\end{align}

The optimal solution is stated in the following Proposition.
\begin{proposition} \label{proposition}
Consider the optimization problem
\begin{align}
X_k^{\star}={\arg \max}_{X_k \in \left[X_{\ell b},X_{ub}\right]} f(X_k) \label{OptX}
\end{align}
with $f(X_k)$ in \eqref{Eq:f_X}. Also, let $X_k^1$ and $X_k^2$ be defined as
\begin{align}
X_k^1=\frac{\Im\{\frac{c_1}{a_k^*}\}+2R_{0,k} \Im\{c_1\}+c_2 \Im\{\frac{1}{a_k^*}\}}{2 \left(\Re\{c_1\}+R_{0,k} c_1 a_k^* \right)+c_2} \label{Eq:Xk}
\end{align}
\vspace{-0.20cm}
\begin{align}
X_k^2 &=\frac{\Re\{c_1\} + R_{0,k} \Re\{c_1 a_k^*\}+ \frac{c_2}{2}}{\Re\{c_1\}\Im\{a_k\}-\Re\{a_k\}\Im\{c_1\}} \nonumber \\
&- \frac{\left|c_1 a_k^* \left( \frac{\Re\{a_k\}}{|a_k|^2}+R_{0,k}\right) + \frac{c_2}{2}\right|}{\Re\{c_1\}\Im\{a_k\}-\Re\{a_k\}\Im\{c_1\}}
\end{align}

Define $\bar R(X_k) = R(R_{0,k}+jX_k)$, with $R(\cdot)$ given in \eqref{refeq1}. Then, the optimal solution $X_k^{\star}$ in \eqref{OptX} is the following.
\begin{itemize}
\item If $c_1 a_k^*=c_1^* a_k$ and $\left(2 \Re\{c_1\}+2R_{0,k} c_1 a_k^*+c_2\right) > 0$: \vspace{-0.15cm}
\begin{equation}
X_k^{\star} = \begin{cases} X_k^1 & {\rm{if}} \; X_k^1 \in \left(X_{\ell b},X_{ub}\right) \nonumber \\
 X_{\ell b} & {\rm{if}} \; X_k^1 \le X_{\ell b} \nonumber \\
 X_{ub} & {\rm{if}} \; X_k^1 \ge X_{ub} \nonumber \end{cases} \vspace{-0.15cm}
\end{equation}
\item If $c_1 a_k^*=c_1^* a_k$ and $\left(2 \Re\{c_1\}+2R_{0,k} c_1 a_k^*+c_2\right) < 0$: \vspace{-0.15cm}
\begin{equation}
X_k^{\star} = \begin{cases} X_{\ell b} & {\rm{if}} \; X_k^1 \in \left(X_{\ell b},X_{ub}\right), \; \bar R(X_{\ell b}) \ge \bar R(X_{ub}) \nonumber \\
X_{ub} & {\rm{if}} \; X_k^1 \in \left(X_{\ell b},X_{ub}\right), \; \bar R(X_{\ell b}) < \bar R(X_{ub}) \nonumber \\
X_{ub} & {\rm{if}} \; X_k^1 \le X_{\ell b} \nonumber \\
X_{\ell b} & {\rm{if}} \; X_k^1 \ge X_{ub} \nonumber \end{cases} \vspace{-0.15cm}
\end{equation}
\item If $c_1 a_k^*=c_1^* a_k$ and $\left(2 \Re\{c_1\}+2R_{0,k} c_1 a_k^*+c_2\right) = 0$: \vspace{-0.15cm}
\begin{equation}
X_k^{\star} = \begin{cases} X_{\ell b} & {\rm{if}} \; \bar R(X_{\ell b}) \ge \bar R(X_{ub}) \nonumber \\
X_{ub} & {\rm{if}} \; \bar R(X_{\ell b}) < \bar R(X_{ub}) \nonumber
\end{cases} \vspace{-0.15cm}
\end{equation}
\item If $c_1 a_k^* \neq c_1^* a_k$ and $\Re\{c_1\} \Im\{a_k\} > \Re\{a_k\} \Im\{c_1\}$: \vspace{-0.15cm}
\begin{equation}
X_k^{\star}=
\begin{cases}
X_k^2 & {\rm{if}} \; X_k^2 \in \left(X_{\ell b},X_{ub}\right), \; \bar R(X_k^2) \ge \bar R(X_{ub}) \nonumber \\
X_{ub} & {\rm{if}} \; X_k^2 \in \left(X_{\ell b},X_{ub}\right), \; \bar R(X_k^2) < \bar R(X_{ub}) \nonumber \\
X_{\ell b} & {\rm{if}} \; X_k^2 \le X_{\ell b}, \; \bar R(X_{\ell b}) \ge \bar R(X_{ub}) \nonumber \\
X_{ub} & {\rm{if}} \; X_k^2 \le X_{\ell b}, \; \bar R(X_{\ell b}) < \bar R(X_{ub}) \nonumber \\
X_{ub} & {\rm{if}} \; X_k^2 \ge X_{ub}
\nonumber \end{cases} \vspace{-0.15cm}
\end{equation}
\item If $c_1 a_k^* \neq c_1^* a_k$ and $\Re\{c_1\} \Im\{a_k\} < \Re\{a_k\} \Im\{c_1\}$: \vspace{-0.15cm}
\begin{equation}
X_k^{\star} =
\begin{cases}
X_k^2 & {\rm{if}} \; X_k^2 \in \left(X_{\ell b},X_{ub}\right), \; \bar R(X_k^2) \ge \bar R(X_{\ell b}) \nonumber \\
X_{\ell b} & {\rm{if}} \; X_k^2 \in \left(X_{\ell b},X_{ub}\right), \ \bar R(X_k^2) < \bar R(X_{\ell b}) \nonumber \\
X_{ub} & {\rm{if}} \; X_k^2 \ge X_{ub}, \; \bar R(X_{ub}) \ge \bar R(X_{\ell b}) \nonumber
\\
X_{\ell b} & {\rm{if}} \; X_k^2 \ge X_{ub}, \; \bar R(X_{ub}) < \bar R(X_{\ell b}) \nonumber
\\
X_{\ell b} & {\rm{if}} \; X_k^2 \le X_{\ell b}
\nonumber \end{cases}
\end{equation}
\end{itemize}
\end{proposition}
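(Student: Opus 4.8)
The plan is to reduce the rate maximization to a scalar one-dimensional calculus problem, find all stationary points of the reduced objective in closed form, and then carry out a case analysis driven by the sign of its derivative on the interval $[X_{\ell b},X_{ub}]$.

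\emph{Reduction.} Since $\log_2$ is strictly increasing and, by \eqref{refeq1}, the term $\log_2[\det(\mathbf I_L+\mathbf B_k\mathbf Q\mathbf B_k^H/\sigma^2)]$ does not depend on $z_k$, maximizing $\bar R(X_k)$ over $\mathcal P$ is equivalent to maximizing $f(X_k)=\det(\mathbf S_k(R_{0,k}+jX_k))$, which is the expression in \eqref{Eq:f_X} and is positive because $\mathbf S_k$ is a product of two positive-definite matrices. Writing $a_k=\Re\{a_k\}+j\Im\{a_k\}$ makes $\Re\{\chi_k\}$ and $\Im\{\chi_k\}$ affine in $X_k$, and clearing denominators in \eqref{Eq:f_X} yields $f(X_k)=1+g(X_k)/h(X_k)$ with $g(X_k)=pX_k+q$ affine, $h(X_k)=|\chi_k(X_k)|^2=|a_k|^2X_k^2-2\Im\{a_k\}X_k+|1+a_kR_{0,k}|^2$ a strictly convex quadratic that is positive on $\mathcal P$ (here $a_k\neq0$ and $\chi_k\neq0$ are used), and
\[
p=2\Im\{c_1a_k^*\},\qquad q=2\Re\{c_1\}+2R_{0,k}\Re\{c_1a_k^*\}+c_2 .
\]
The key point is that $\operatorname{sign}(f'(X_k))=\operatorname{sign}(\phi(X_k))$, where $\phi(X_k)=g'h-gh'=-p|a_k|^2X_k^2-2q|a_k|^2X_k+(ps-qr)$, with $r=-2\Im\{a_k\}$ and $s=|1+a_kR_{0,k}|^2$. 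I would also record the dictionary $c_1a_k^*=c_1^*a_k\iff c_1a_k^*\in\mathbb R\iff p=0$, and $\Re\{c_1\}\Im\{a_k\}>\Re\{a_k\}\Im\{c_1\}\iff p<0$ (reversed inequality $\iff p>0$).

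\emph{The degenerate case $p=0$.} Then $g\equiv q$, so $f=1+q/h$ is strictly unimodal about the vertex $X_v=\Im\{a_k\}/|a_k|^2$ of $h$: a strict maximum if $q>0$, a strict minimum if $q<0$, and $f\equiv1$ if $q=0$. I would check that \eqref{Eq:Xk} collapses to $X_v$ under the assumption $c_1a_k^*=c_1^*a_k$ and that $\operatorname{sign}(q)=\operatorname{sign}(2\Re\{c_1\}+2R_{0,k}c_1a_k^*+c_2)$. Then, for $q>0$ the constrained maximizer is $X_k^1$ clamped to $[X_{\ell b},X_{ub}]$ (first bullet); for $q<0$ the maximum over a closed interval is at an endpoint, with the choice dictated by whether $\mathcal P$ lies left of, right of, or straddles the minimizer $X_k^1$ (second bullet); for $q=0$, $f$ and hence $\bar R$ are constant, so any feasible point, in particular the one returned, is optimal (third bullet).

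\emph{The generic case $p\neq0$, which contains the main obstacle.} Now $\phi$ is a genuine quadratic, and the crux is the algebraic identity
\[
q^2|a_k|^2-p(qr-ps)=4|a_k|^2\,\bigl|\,c_1a_k^*\bigl(\tfrac{\Re\{a_k\}}{|a_k|^2}+R_{0,k}\bigr)+\tfrac{c_2}{2}\,\bigr|^2 ,
\]
which I expect to be the most laborious step; it simultaneously shows that $\phi$ has two real roots and, together with $q/2=\Re\{c_1\}+R_{0,k}\Re\{c_1a_k^*\}+c_2/2$ and $-p/2=\Re\{c_1\}\Im\{a_k\}-\Re\{a_k\}\Im\{c_1\}$, turns the quadratic formula into $-q/p\pm\tfrac2p\bigl|c_1a_k^*(\tfrac{\Re\{a_k\}}{|a_k|^2}+R_{0,k})+\tfrac{c_2}{2}\bigr|$. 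The root called $X_k^2$ in the statement is exactly the one that is a local maximum of $f$: since the leading coefficient of $\phi$ is $-p|a_k|^2$, when $p<0$ the parabola $\phi$ opens upward, so $f$ increases, then decreases, then increases, with its local maximum at the smaller root $=X_k^2$; when $p>0$, $\phi$ opens downward, so $f$ decreases, then increases, then decreases, with its local maximum at the larger root $=X_k^2$. In both cases I would conclude by scanning the position of $X_k^2$ relative to $[X_{\ell b},X_{ub}]$: the endpoint on the side where $f$ is monotone between it and $X_k^2$ is dominated and can be discarded, while the other endpoint must be compared against $X_k^2$ through $\bar R$; this reproduces the fourth and fifth bullets, with a repeated root (so $f$ monotone on $\mathcal P$ and $X_k^2=-q/p$) absorbed by the same formulas because the $\bar R$-comparisons still select the correct endpoint. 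The only leftover work is the routine expansions that establish $f=1+g/h$ and the displayed identity.
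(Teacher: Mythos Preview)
Your proposal is correct and follows essentially the same approach as the paper, which gives only a one-line sketch (``compute the first- and second-order derivatives of $f(X_k)$ and analyze when the stationary points are maxima in the feasible set''); your write-up simply fills in the details. The only minor stylistic difference is that you track the sign of $\phi=g'h-gh'$ to read off the global monotonicity structure of $f$ instead of literally evaluating $f''$ at stationary points, but this leads to the same casework and the same closed-form candidates $X_k^1,X_k^2$.
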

\begin{proof}
It follows by computing the first- and second-order derivatives of $f(X_k)$ as a function of  $X_k$, and by analyzing when the stationary points are maxima in the feasible set.
\end{proof} \vspace{-0.25cm}

Based on Proposition \ref{proposition}, the proposed complete algorithm for iteratively solving $\mathbf{(P0)}$ is given in Algorithm \ref{algo}.

\begin{algorithm}[!t] \footnotesize
\caption{Proposed algorithm for solving $\mathbf{(P0)}$}\label{algo}
\begin{algorithmic}
\State  {\bf{Input}}: Compute the impedance matrices from \cite[Lemma~2]{marco_EuCAP};
\State {\bf{Initialize}}: $q=0$, $\epsilon \geq 0$, $\mathbf{r}_0=[R_{0,1}, \hdots, R_{0,N_{\rm RIS}}]^T$, $\mathbf{x}^{(0)}=[X_1^{(0)}, \hdots, X_{N_{\rm RIS}}^{(0)}]^T \in \mathcal{P}^{N_{\rm RIS}}$, $R^{(-1)}=0$, $R^{(0)}=R(\mathbf{Q}^{(0)}, \mathbf{Z}^{(0)})$ with $\mathbf{Z}^{(0)} = \diag(\mathbf{r}_0) + j \diag(\mathbf{x}^{(0)})$ and $R(\cdot,\cdot)$ defined in \eqref{rate};
\While {$|R^{(q)}-R^{(q-1)}| > \epsilon$}
\State Compute $\mathbf{Q^{\star}}$ from \eqref{waterfilling};
\For{$k=1, \hdots, N_{\rm RIS}$}
\State Compute $X_k^{\star}$ from Proposition \ref{proposition};
\State Update $\mathbf{Z}_{\rm RIS}^{\star}(k,k) \leftarrow R_{0,k} + j X_k^{\star}$;
\EndFor
\State $q=q+1$, $R^{(q)}=R(\mathbf{Q^{\star}},\mathbf{Z}_{\rm RIS}^{\star})$;
 \EndWhile
\State  {\bf{Return}}: $\mathbf{Q^{\star}}$ and $\mathbf{Z}_{\rm RIS}^{\star}$.
\end{algorithmic}
\end{algorithm}
\setlength{\textfloatsep}{3pt}%

\vspace{-0.25cm}
\section{Complexity and Convergence} \vspace{-0.15cm}
\subsection{Computational Complexity} \vspace{-0.1cm}
We evaluate the computational complexity per iteration (i.e., for one iteration of the {\bf{while}} loop in Algorithm \ref{algo}) in terms of complex multiplications. The complexity is determined by the number of multiplications needed to compute $\mathbf{Q^{\star}}$ and by $N_{\rm{RIS}}$ times (because of the {\bf{for}} loop) the number of multiplications needed to compute $X_k^{\star}$. For simplicity, we assume $L \le M$, $L \ll N_{\rm RIS}$, $M \ll N_{\rm RIS}$. The complexity of $\mathbf{Q^{\star}}$ is determined by the computation of $\mathbf{H}_{\rm E2E}$ in \eqref{channel}, whose complexity is determined by the product $\mathbf{Z}_{\rm ROS} \mathbf{Z}_{\rm sca} \mathbf{Z}_{\rm SOT}$. Thus, the complexity is $\mathcal{O}(N_{\rm RIS}^3+N_{\rm RIS}^2L)$. The complexity of $X_k^{\star}$ is determined by the computations of $\mathbf{A}_k^{-1}$, $\mathbf{B}_k$, $\mathbf{C}_k$, as per \eqref{TildeA}-\eqref{X2}, whose complexities are $\mathcal{O}(N_{\rm RIS}^3)$, $\mathcal{O}(N_{\rm RIS}^2L)$, $\mathcal{O}(N_{\rm RIS}^2L)$. Thus, the complexity is $\mathcal{O}(N_{\rm RIS}^3+N_{\rm RIS}^2L)$. In conclusion, the complexity of Algorithm \ref{algo} is $\mathcal{O}\left((N_{\rm RIS}+1)(N_{\rm RIS}^3+N_{\rm RIS}^2 L)\right)$.

With similar approximations, the complexities of the algorithms in \cite{abrardo_MIMO}  and \cite{placido_MISO} are $\mathcal{O}(2N_{\rm RIS}^3+N_{\rm RIS}^2 L^3)$ and $\mathcal{O}\left(2N_{\rm RIS}^3+N_{\rm RIS}^2(LM+L+M)\right)$, respectively. Since Algorithm \ref{algo} optimizes the RIS elements one by one iteratively, the complexity scales as $\mathcal{O}\left(N_{\rm RIS}^4\right)$. The complexities of \cite{abrardo_MIMO}  and \cite{placido_MISO} scale as $\mathcal{O}\left(N_{\rm RIS}^3\right)$, since all the RIS elements are optimized at once. The overall complexity depends, however, on the number of iterations to converge and the amount of time (in seconds) that each algorithm needs per iteration. In contrast to \cite{abrardo_MIMO}  and \cite{placido_MISO}, $X_k^{\star}$ is available in closed-form. In Sec. V, we show that Algorithm \ref{algo} needs few iterations to converge.

\vspace{-0.35cm}
\subsection{Convergence}
Algorithm \ref{algo} exploits the BCD method. Specifically, the objective function and the constraints in $(\mathbf{P1})$ are continuous and differentiable, the constraints have a decomposable (decoupled) structure in the optimization variables, and the feasible set of each optimization variable is closed and convex. Also, according to Proportion \ref{proposition}, the solution $X_k^{\star}$ in \eqref{OptX} is the unique optimum. Based on \cite[Prop. 2.7.1]{convergence1}, therefore, Algorithm \ref{algo} converges to a stationary point of $(\mathbf{P1})$ and $(\mathbf{P0})$.

\vspace{-0.25cm}
\section{Numerical Results}
{\textcolor{black}{The simulation setup and parameters are the same as in \cite{placido_MISO} to facilitate comparison}. Specifically, we consider a 4-antenna transmitter whose center is located at $(0,0) \lambda$, a single-antenna receiver located at $(9.6, 14.4) \lambda$, and an RIS whose center is located at $(0,24) \lambda$ where $\lambda=10$~cm. The inter-distance at the transmitter is $\lambda/2$. The transmit and receive antennas, RIS elements, and scattering objects are identical thin wires of length $l=\lambda/2$ and radius $a=\lambda/500$. All of them are oriented as in \cite[Fig. 1]{placido_MISO}. We set $R_{0,k}=0.2$~Ohm, $\forall k$, $P_t=21$~dBm, $\sigma^2=-80$~dBm, $\mathcal{P}=[-302.50, -19.66]$~Ohm, $\mathbf{Z}_G=\mathbf{Z}_L= 50\mathbf{I}_M$~Ohm, $\mathbf{Z}_{\rm US}=\mathbf{0}_{N_e}$~Ohm. Also, we consider the presence of $4$ randomly distributed clusters each containing $N_e=50$ scattering objects. The direct link is ignored due to the presence of obstacles. The results are averaged over 100 independent realizations for the locations of the scattering objects. In Algorithm \ref{algo}, $\mathbf{x}^{(0)}$ is initialized at random in the feasible set and $\mathbf{Q}^{(0)}$ is computed from \eqref{waterfilling} given $\mathbf{x}^{(0)}$.

{\textcolor{black}{In Fig.~\ref{fig:convergence1}, we illustrate the rate when the inter-distance and the number of RIS elements are configured for ensuring that the size of the RIS is the same. Specifically, we consider the case study when the mutual coupling is taken into account at the design stage (MCA)}}. We see that Algorithm \ref{algo} provides superior performance compared with the algorithms in \cite{abrardo_MIMO} and \cite{placido_MISO}. Specifically, Algorithm \ref{algo} (i) converges faster and (ii) reaches a higher value of rate. Both benefits are attributed to two features of Algorithm \ref{algo}: avoiding the Neumann series approximation and using closed-form expressions for $\mathbf{Q^{\star}}$ and $\mathbf{Z}_{\rm RIS}^{\star}$ at each iteration. Similar to \cite{abrardo_MIMO} and \cite{placido_MISO}, we see the benefits of reducing the inter-distance of the RIS elements and considering the mutual coupling at the optimization stage.

To better evaluate the execution time of Algorithm \ref{algo} and compare it against \cite{abrardo_MIMO} and \cite{placido_MISO}, Table~\ref{computation_time} shows the time (in seconds) that the algorithms need to converge. Specifically, the algorithms are deemed to have converged if the increment of the rate in two consecutive iterations is less than $10^{-4}$. First, we note that the three algorithms never cross each other, even at convergence: Algorithm \ref{algo} reaches always the highest value of rate, whereas, at convergence, the algorithms in \cite{abrardo_MIMO} and \cite{placido_MISO} reach (on average with respect to $d$) the $90\%$ and $98\%$ of the rate provided by Algorithm \ref{algo}. In Table~\ref{computation_time}, we report the amount of time that the algorithms in \cite{abrardo_MIMO} and \cite{placido_MISO} need to reach convergence, and the amount of time that is required for Algorithm \ref{algo} to reach the $90\%$ and $98\%$ of the rate that it achieves at convergence. We see the superiority of Algorithm \ref{algo}, especially for small values of the inter-distance $d$.

{\textcolor{black}{In Fig.~\ref{fig:convergence2}, we show the rate when $N_{\rm{RIS}}$ is kept fixed, and the size of the RIS decreases when $d$ decreases. We report only the results for Algorithm \ref{algo}, since the algorithms in \cite{abrardo_MIMO} and \cite{placido_MISO} provide similar trends as those shown in Fig.~\ref{fig:convergence1}. We compare the case study when the mutual coupling is disregarded (MCU) against the case study MCA, similar to \cite{xuewen_SISO}. In the MCU case, the rate decreases as the mutual coupling becomes more significant ($d$ decreases). In the MCA case, the rate has a non-monotonic behavior with $d$. Notably, the setup with $d = \lambda/2$ (negligible mutual coupling) provides almost the same rate as the setup with $d = \lambda/16$. Thus, the size of the RIS can be reduced by a factor of eight while keeping $N_{\rm{RIS}}$ fixed}}.

\begin{figure}[!t]
\center
\includegraphics[width=0.73\columnwidth]{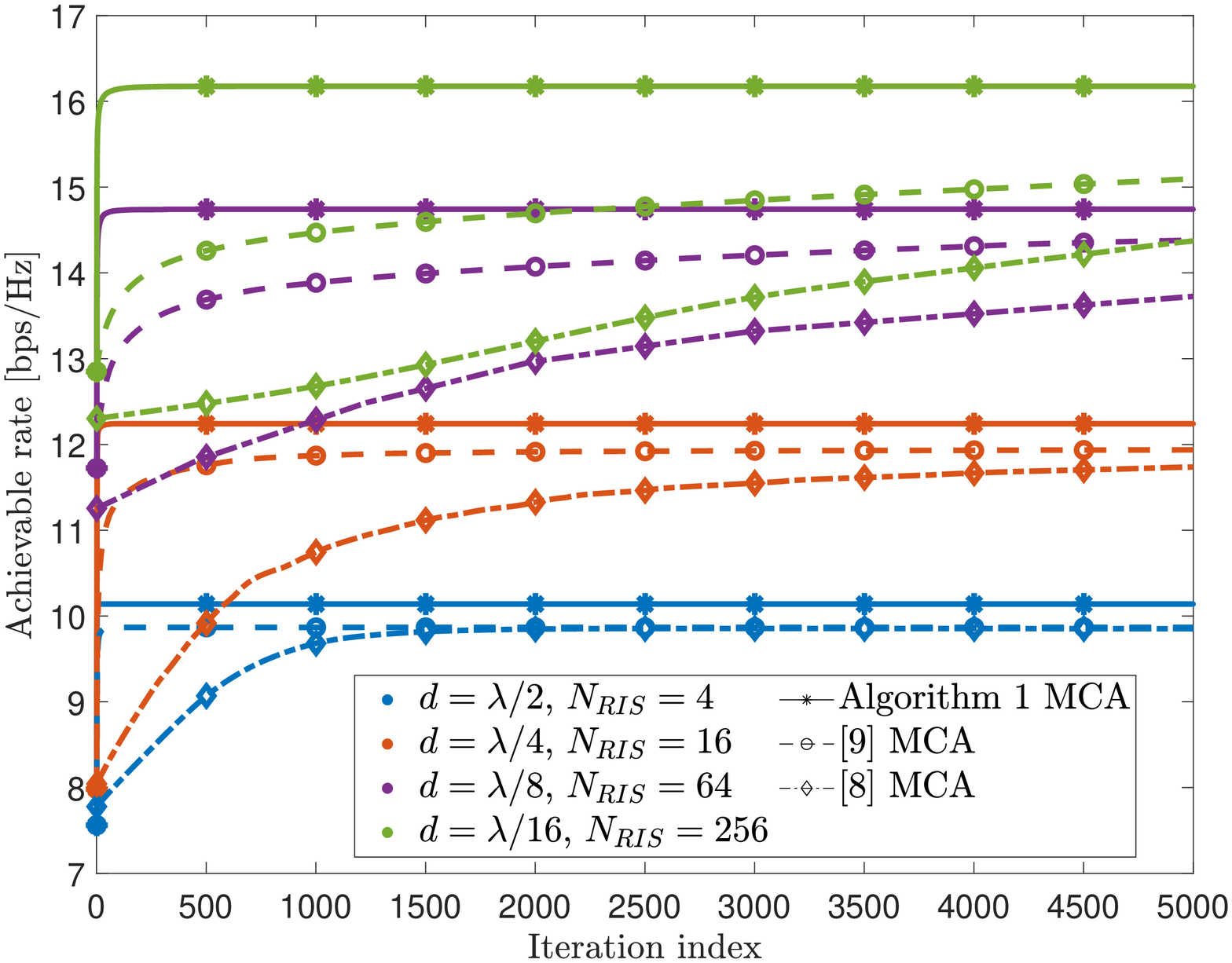} \vspace{-0.25cm}
\caption{\footnotesize Fig. 1: Convergence of the algorithms (the size of the RIS is equal)}
\label{fig:convergence1} \vspace{-0.25cm}
\end{figure}
\begin{table}[!t]
\centering
\caption{\footnotesize Table II: Comparison of the execution time [seconds]} \vspace{-0.2cm} \label{computation_time}\scalebox{0.9}{
\begin{tabular}{|>{\columncolor[gray]{1.0}}c|c|c|}
\hline
\bf{d} &  \bf{Algo.~\ref{algo} ($90\%$)} & \bf{\cite{abrardo_MIMO}} \\
\hline \hline
$\lambda/2$ & 0.001 & 0.800 \\
\hline
$\lambda/4$ & 0.004 & 0.770  \\
\hline
$\lambda/8$ & 0.167
& 8.135 \\
\hline
 $\lambda/16$ & 16.530 & 213.128 \\
 \hline
\end{tabular}
}
\hspace{0.01in}
\centering \scalebox{0.9}{
\begin{tabular}{|>{\columncolor[gray]{1.0}}c|c|c|}
\hline
\bf{d} & \bf{Algo.~\ref{algo} ($98\%$)} & \bf{\cite{placido_MISO}} \\
\hline \hline
$\lambda/2$ & 0.001 & 0.0154 \\
\hline
$\lambda/4$ & 0.008 & 0.896  \\
\hline
$\lambda/8$ & 0.834
& 27.686 \\
\hline
 $\lambda/16$ & 170.807 & 946.404 \\
 \hline
\end{tabular}
}
\end{table}
\begin{figure}[!t]
\center
\includegraphics[width=0.73\columnwidth]{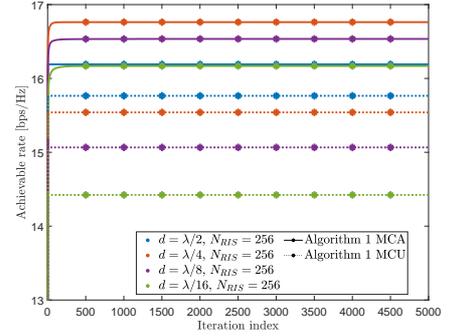} \vspace{-0.25cm}
\caption{\footnotesize Fig. 2: Convergence of the algorithms  (the number of RIS elements is equal)}
\label{fig:convergence2}
\end{figure}

\vspace{-0.25cm}
\section{Conclusion} \vspace{-0.15cm}
{\textcolor{black}{By assuming a discrete thin wire dipole model, we have proposed a novel and provably convergent algorithm for optimizing RIS-assisted MIMO systems in the presence of scattering objects in the environment and mutual coupling among the RIS elements. As for the impact of mutual coupling, three main conclusions can be drawn: (1) if the size of the RIS is kept fixed and the number of RIS elements is increased, the rate increases if the RIS is optimized by taking the mutual coupling into account; (2) if the number of RIS elements is kept fixed and the inter-distance is reduced, the physical size of the RIS can be reduced with no performance degradation with respect to the typical $d = \lambda/2$ configuration, provided that the RIS is optimized by taking the mutual coupling into account; (3) if the mutual coupling is ignored, the rate decreases. Thus, it is important to model the mutual coupling accurately and to duly take it into account when optimizing the RIS}}.

\vspace{-0.25cm}
\bibliographystyle{IEEEtran}
\bibliography{biblio_new}

\begin{thebibliography}{10}
\providecommand{\url}[1]{#1}
\csname url@samestyle\endcsname
\providecommand{\newblock}{\relax}
\providecommand{\bibinfo}[2]{#2}
\providecommand{\BIBentrySTDinterwordspacing}{\spaceskip=0pt\relax}
\providecommand{\BIBentryALTinterwordstretchfactor}{4}
\providecommand{\BIBentryALTinterwordspacing}{\spaceskip=\fontdimen2\font plus
\BIBentryALTinterwordstretchfactor\fontdimen3\font minus
  \fontdimen4\font\relax}
\providecommand{\BIBforeignlanguage}[2]{{%
\expandafter\ifx\csname l@#1\endcsname\relax
\typeout{** WARNING: IEEEtran.bst: No hyphenation pattern has been}%
\typeout{** loaded for the language `#1'. Using the pattern for}%
\typeout{** the default language instead.}%
\else
\language=\csname l@#1\endcsname
\fi
#2}}
\providecommand{\BIBdecl}{\relax}
\BIBdecl

\bibitem{MDR_JSAC}
M.~Di~Renzo \emph{et~al.}, ``Smart radio environments empowered by
  reconfigurable intelligent surfaces: How it works, state of research, and the
  road ahead,'' \emph{IEEE J. Sel. Areas Commun.}, vol.~38, pp. 2450--2525,
  2020.

\bibitem{MDR_PIEEE}
------, ``Communication models for reconfigurable intelligent surfaces: From
  surface electromagnetics to wireless networks optimization,'' \emph{Proc. of
  the IEEE}, vol. 110, no.~9, pp. 1164--1209, 2022.

\bibitem{marco_EuCAP}
V.~Galdi \emph{et~al.}, ``Modeling the mutual coupling of reconfigurable
  metaurfaces,'' in \emph{European Conf. Antennas and Propag. (EuCAP)}, 2023.

\bibitem{gradoni2021end}
G.~Gradoni \emph{et~al.}, ``End-to-end mutual coupling aware communication
  model for reconfigurable intelligent surfaces: An electromagnetic-compliant
  approach based on mutual impedances,'' \emph{IEEE Wireless Commun. Letters},
  vol.~10, no.~5, pp. 938--942, 2021.

\bibitem{DDA}
P.~C. Chaumet, ``The discrete dipole approximation: A review,''
  \emph{Mathematics}, vol.~10, no.~17, 2022.

\bibitem{xuewen_SISO}
X.~Qian \emph{et~al.}, ``Mutual coupling and unit cell aware optimization for
  reconfigurable intelligent surfaces,'' \emph{IEEE Wireless Commun. Lett.},
  vol.~10, no.~6, pp. 1183--1187, 2021.

\bibitem{abrardo_MIMO}
A.~Abrardo \emph{et~al.}, ``{MIMO} interference channels assisted by
  reconfigurable intelligent surfaces: Mutual coupling aware sum-rate
  optimization based on a mutual impedance channel model,'' \emph{IEEE Wireless
  Commun. Lett.}, vol.~10, no.~12, pp. 2624--2628, 2021.

\bibitem{placido_MISO}
P.~Mursia \emph{et~al.}, ``{SARIS}: Scattering aware reconfigurable intelligent
  surface model and optimization for complex propagation channels,''
  \emph{arXiv preprint arXiv:2302.01739v2}, 2023.

\bibitem{ZelayaMH21}
R.~I. Zelaya \emph{et~al.}, ``Towards {6G} and beyond: Smarten everything with
  metamorphic surfaces,'' in \emph{HotNets 2021}.\hskip 1em plus 0.5em minus
  0.4em\relax {ACM}, 2021, pp. 155--162.

\bibitem{waterfill}
S.~Zhang \emph{et~al.}, ``Capacity characterization for intelligent reflecting
  surface aided {MIMO} communication,'' \emph{IEEE J. Sel. Areas Commun.},
  vol.~38, no.~8, pp. 1823--1838, 2020.

\bibitem{convergence1}
D.~P. Bertsekas, ``Nonlinear programming,'' \emph{J. Operational Research
  Society}, vol.~48, no.~3, pp. 334--334, 1997.

\bibitem{book_sciComputing}
W.~H. Press \emph{et~al.}, \emph{Numerical recipes 3rd edition: The art of
  scientific computing}.\hskip 1em plus 0.5em minus 0.4em\relax Cambridge
  university press, 2007.

\bibitem{sylvester}
E.~H. Bareiss, ``Sylvester’s identity and multistep integer-preserving
  gaussian elimination,'' \emph{Mathematics of Computation}, vol.~22, 1968.

\end{thebibliography}

\end{document}